\pgfplotsset{compat=1.16}
\newtheorem{theorem}{Theorem}[section]
\newtheorem{lemma}[theorem]{Lemma}
\newtheorem{definition}[theorem]{Definition}
\newtheorem{assumption}[theorem]{Assumption}
\begin{document}
\begin{frontmatter}

\title{Probabilistic Reachability and Invariance Computation of Stochastic Systems using Linear Programming} 

\thanks[footnoteinfo]{Work supported by the European Research Council under the Horizon 2020 Advanced under Grant 787845 (OCAL).}

\author[First]{Niklas Schmid} 
\author[First]{John Lygeros} 

\address[First]{Automatic Control Laboratory, ETH Zürich, 8092, Zürich, Switzerland (e-mail: \{nikschmid, jlygeros\}@ethz.ch).}

\begin{abstract}                
We consider the safety evaluation of discrete time, stochastic systems over a finite horizon. Therefore, we discuss and link probabilistic invariance with reachability as well as reach-avoid problems. We show how to efficiently compute these quantities using dynamic and linear programming. 
\end{abstract}

\begin{keyword}
 Stochastic Systems, Reachability Analysis, Linear Programming, Dynamic Programming, Invariance, Viability, Safety, Optimal Control
\end{keyword}

\end{frontmatter}

\section{Introduction}
The need for safety guarantees when controlling systems arises in many important fields of application, e.g., in air-traffic management \citep{application_aircraftConflict} 
and flight control \citep{application_spacecraft}. In these applications, safety is commonly defined by a certain set of safe states, in the sense that any state trajectory leaving this set during the mission is considered unsafe. Obviously, whether safety can be guaranteed depends on the initial state that the system is in during the beginning of the mission. A natural question arising is how to evaluate which set of initial states allows for a safe evolution of the state trajectory of a controlled system. For stochastic systems, safety might only be guaranteed up to some probability. In such cases the following two questions might arise: What is the maximum / minimum achievable probability that the state trajectory remains within a certain safe set (Invariance)? What is the maximum / minimum achievable probability that the state trajectory reaches a certain target set (Reachability)? Both questions can be connected by interpreting a trajectory to be safe whenever it passes through a target set at a desired point in time. Furthermore, one can combine the two notions, leading to reach-avoid problems, where the goal is to reach a target set while staying safe. 



The computation of safety for stochastic and deterministic, continuous and discrete time systems has been extensively studied in the literature (e.g., by \cite{intermediate_DpReachability}, \cite{ct_example_mitchell}, \cite{Liao}, \cite{ct_example_peyman}). This paper specifically considers nonlinear, stochastic systems, where closed form solutions for safety guarantees typically do not exist. In such cases, despite their computational complexity, dynamic programming (DP) methods have proven useful, see \cite{intermediate_DpReachability}. 

Since these dynamic programming solutions are infinite dimensional, approximation techniques using gridding, semidefinite programming \citep{similarMethods_SemidefProg} and Lagrangian methods \citep{similarMethods_Lagrangian} are applied. In the case of reach-avoid problems, \cite{similarSolutions_LpReachAviod} propose linear programming formulations, which allow for the utilization of basis function approximations. \cite{similarSolutions_GaoDiscrete} extend these results to maximum invariance problems, but assume discrete states and actions. In this paper, we generalize these linear programming formulations to minimum and maximum reachability and invariance problems with continuous states and actions. Our contributions are as follows.
\begin{itemize}
    \item First, we establish a link between probabilistic reachability and invariance. This can be considered an extension to the observations by \cite{intermediate_JohnReachInvVia} and \cite{Liao} for deterministic systems.
    \item We then utilize simplified variants of established DP recursions \citep{intermediate_ComputationalApproaches,similarSolutions_stochasticGames} to formulate linear programs (LP) for the computation of reachable and invariant sets. Our results allow for a clear perspective and intuition on probabilistic reachability and invariance and their connection.
\end{itemize}
Other notable discussions on infinite-dimensional linear programming for the invariant set computation are given by \cite{ct_example_korda} and \cite{similarMethods_PeakEstimation}, although they follow a slightly different approach.

\subsection{Notation}
We denote by $\mathbb{1}_A(x)$ the indicator function in the set $A$, where, if $x\in A$, then $\mathbb{1}_A(x)=1$ and if $x\notin A$, then $\mathbb{1}_A(x)=0$. For two sets $X,Y$ we denote by $X\setminus Y = \{x\in X: x\notin Y\}$. $P$ denotes probability, $\mathbf{E}$ expectation and $\mathcal{B}(X)$ the Borel $\sigma$-Algebra on a topological space $X$. 

\subsection{Structure}
Section \ref{sec_safetyAndStochasticSystems} acts as an introduction. We define what we mean by stochastic systems in \ref{section_stochastic_systems}, safety in \ref{section_safety}, invariance and reachability in \ref{section_invReachVia}. Section \ref{sec_computation} discusses how to compute these quantities. We first state DP recursions in \ref{section_dp}, then reformulate these recursions as an infinite dimensional linear program in \ref{section_lp}. In order to solve the infinite dimensional linear program, we discuss approximation methods in \ref{section_approx} and apply them in numerical examples in \ref{section_numerics}. We summarize our results in \ref{section_concl}.


\section{Safety of stochastic Systems}
\label{sec_safetyAndStochasticSystems}
\subsection{Stochastic Systems and Policies}
\label{section_stochastic_systems}
A discrete time stochastic system is described by the state space $\mathcal{X}\subseteq \mathbb{R}^n$, a compact Borel set $\mathcal{U}\subseteq \mathbb{R}^m$ denoting the action space, and a Borel-measurable stochastic kernel $T:\mathcal{X}\times \mathcal{X}\times \mathcal{U}\rightarrow [0,1]$, which, given $x\in\mathcal{X}, u\in\mathcal{U}$, assigns a probability measure $T(\cdot|x,u)$ on the set $\mathcal{B}(\mathcal{X})$. 

We will denote a state at time-step $k\in\mathbb{N}$ as $x_k$ and a sequence of states $x_k,\dots,x_N$ as $x_{k:N}$. The state evolves probabilistically according to the transition kernel: For $x_k\in\mathcal{X},u_k\in\mathcal{U}$ the state transitions to a state $x_{k+1}$ in the set $B\subseteq\mathcal{B}(\mathcal{X})$ with probability $T(B|x_k,u_k)$. 

A Markov policy $\pi$ is a sequence $\pi = (\mu_0, \mu_1,\dots)$ of universally measurable maps $\mu_k: \mathcal{X}\rightarrow\mathcal{U}, k=0,1,\dots$. Starting from an initial state $x_0\in\mathcal{X}$ and under the policy $\pi$ the state evolves as $x_{k+1}\sim T(\cdot|x_k,\mu_k(x_k))$.
We denote the set of Markov policies by $\Pi$.

The transition kernel $T$, initial state $x_0\in\mathcal{X}$ and policy $\pi \in \Pi$ define a unique probability measure over $\mathcal{B}(\mathcal{X}^{N+1})$ for the trajectories (see, for example, \cite{intermediate_DpReachability}).

\subsection{Safety of a Policy}
\label{section_safety}
Given an initial state $x_0\in\mathcal{X}, \pi\in\Pi$ and a Borel set $A\subseteq \mathcal{B}(\mathcal{X})$, which we will call the safe set, we wish to compute the probability $p^{\pi}_{x_0} = P(x_{0:N}\in A|\pi,x_0)$
that the system trajectory remains within $A$ for $k=0,\dots,N$. By abuse of notation, we interpret $x_{0:N}\in A$ to mean $x_k\in A$ for $k=0,\dots N$. As shown by \cite{intermediate_DpReachability}, the safety of a trajectory can be encoded as 
\begin{equation*}
    \prod_{k=0}^N \mathbb{1}_A(x_k)= \begin{cases}
        1 &\text{if $x_{0:N}\in A$,}
        \\ 0 &\text{otherwise}.
        \end{cases}
\end{equation*}
Then, for a stochastic evolution of the state trajectory the probability of safety is defined by
\begin{align*}
    p^{\pi}_{x_0} &=\int_{\mathcal{X}^N} \prod_{k=0}^N \mathbb{1}_A(x_k)P(dx_1,\dots,dx_N|x_0,\pi) \\
    & = \underset{x_1,\dots,x_N}{\mathbf{E}}\left[\prod_{k=0}^N \mathbb{1}_A(x_k)\middle|x_0,\pi\right],
\end{align*}
which can be computed through a dynamic programming recursion. Therefore, we define functions $V_k^{\pi}:\mathcal{X}\rightarrow[0,1]$ to denote $V_k^{\pi}(x_k)=P(x_{k:N}\in A|x_k,\pi)$, then at time step $N$ we trivially obtain $V_N^{\pi}(x_N)=\mathbb{1}_A(x_N)$. Moreover,
\begin{align*}
    &P(x_{k+1:N}\in A|x_{k}, \pi)\\
    &\qquad = \int_{x_{k+1}\in\mathcal{X}}\hspace{-2.5em}P(x_{k+1:N}\in A|x_{k+1}, x_{k}, \pi)P(dx_{k+1}|x_{k}, \pi)\\
    &\qquad = \int_{x_{k+1}\in\mathcal{X}}\hspace{-2.5em}P(x_{k+1:N}\in A|x_{k+1}, \pi)P(dx_{k+1}|x_{k}, \pi) \\
    & \qquad = \int_{x_{k+1}\in\mathcal{X}} \hspace{-2.5em}V_{k+1}^{\pi}(x_{k+1})T(dx_{k+1}|x_{k},\mu_{k}(x_{k})),
\end{align*}
where the second equality follows from the Markov property of the system and the last equality follows from the definitions of $V_{k}^{\pi}$ and $T$.
Consequently, 
\begin{align*}
    V_{k}^{\pi}(x_{k}) &=P(x_{k:N}\in A|x_{k}, \pi)   \\
    &= P(x_{k}\in A|x_{k},\pi)P(x_{k+1:N}\in A|x_{k}, \pi) \\
    &= \mathbb{1}\!_A(x_{k})\!\int_{x_{k+1}\in\mathcal{X}}\hspace{-2.5em}V_{k+1}^{\pi}(x_{k+1})T(dx_{k+1}|x_{k},\mu_{k}(x_{k})).
\end{align*}
Given a policy $\pi$ we can thus compute $V_{0:N}^{\pi}(x)$ recursively backwards in time via
\begin{align*}
    V_N^{\pi}(x_N)\!&=\!\mathbb{1}_A(x_N), \\
    V_{k}^{\pi}(x_{k})\!&=\!\mathbb{1}_A(x_{k})\!\int_{x_{k+1}\in\mathcal{X}}\hspace{-2.5em}\!V_{k+1}^{\pi}(x_{k+1})T(dx_{k+1}|x_{k},\mu_{k}(x_{k})),
\end{align*}
which finally yields the desired probability $p_{x_0}^{\pi}=P(x_{0:N}\in A|x_0,\pi)=V_0^{\pi}(x_0)$.

\subsection{Invariance and Reachability}
\label{section_invReachVia}
We now define the notion of invariance (probability that the trajectory remains in the set $A$) and reachability (probability that the trajectory enters the target set $A$ at least once). 

\begin{definition}
The probabilistic maximum invariant set [$I^\uparrow$], minimum invariant set [$I_\downarrow$], maximum reachable set [$R^\uparrow$] and minimum reachable set [$R_\downarrow$], are defined as
\begin{align*}
\Omega^A_{I^\uparrow,k}(p)\!&=\!\{x_k\!\in\!\mathcal{X}| \exists \pi\!\in\!\Pi\!:\!P(x_{k:N}\!\in\!A|x_k,\pi)\!\geq\!p\},   \\
\Omega^A_{I_\downarrow,k}(p)\!&=\!\{x_k\!\in\!\mathcal{X}|\forall \pi\!\in\!\Pi\!:\! P(x_{k:N}\!\in\!A|x_k,\pi)\!\geq\!p\}, \\
\Omega^A_{R^\uparrow,k}(p)\!&=\!\{x_k\!\in\!\mathcal{X}| \exists \pi\!\in\!\Pi\!:\!P(\exists x_i\!\in\!x_{k:N}\!:\!x_i\!\in\!A|x_k,\!\pi)\!\geq\!p\}, \\
\Omega^A_{R_\downarrow,k}(p)\!&=\!\{x_k\!\in\!\mathcal{X}|\forall \pi\!\in\!\Pi\!: \!P(\exists x_i\!\in\!x_{k:N}\!:\!x_i\!\in\!A|x_k,\!\pi)\!\geq\!p\}. 
\end{align*}
\end{definition}
This definition is in correspondence to the deterministic quantities by \cite{Liao}. Maximum invariant sets are often referred to as viability sets or controlled invariant sets. Minimum invariant and maximum reachable sets are often simply referred to as invariant and reachable sets.

Assuming that the maximum and minimum are attained - we will state sufficient conditions later - it holds that
\begin{align}
    \label{eq_existsForallAndSupinf1}
    \exists \pi\in\Pi: P(\cdot|\cdot,\pi)&\geq p 
    \Leftrightarrow \max_{\pi\in\Pi} P(\cdot|\cdot,\pi)\geq p, \\
    \label{eq_existsForallAndSupinf2}
    \forall \pi\in\Pi: P(\cdot|\cdot,\pi)&\geq p 
    \Leftrightarrow \min_{\pi\in\Pi} P(\cdot|\cdot,\pi)\geq p,
\end{align}
and the respective sets correspond to level sets of the functions $V^*_{I^\uparrow,k,A}(x_k) = \max_{\pi\in\Pi} P(x_{k:N}\in A|x_k,\pi)$, $V^*_{R^\uparrow,k,A}(x_k)=\max_{\pi\in\Pi} P(\exists x_i \in x_{k:N}: x_i \in A|x_k,\pi)$, $V^*_{I_\downarrow,k,A}(x_k)=\min_{\pi\in\Pi} P(x_{k:N}\in A|x_k,\pi)$ and $V^*_{R_\downarrow,k,A}(x_k)\allowbreak=\min_{\pi\in\Pi} P(\exists x_i \in x_{k:N}: x_i \in A|x_k,\pi)$.

In the following sections we will discuss how to compute these functions using linear and dynamic programming. We only consider invariance problems, since reachability problems turn out to be duals to the respective invariance problems, as shown in figure \ref{img_linksBetweenInvarianceReachability}, and formalized in the following statement.



\begin{theorem} 
\label{thm_dual_inv_reach}
\noindent Let $A^c$ denote the complement of $A$. Then
\begin{align*}
V^*_{R^\uparrow,k,A}(x_k) &= 1 - V^*_{I_\downarrow,k,A^c}(x_k), \\
V^*_{R_\downarrow,k,A}(x_k) &= 1 - V^*_{I^\uparrow,k,A^c}(x_k)
\end{align*}
\end{theorem}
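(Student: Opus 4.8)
The plan is to reduce both claimed identities to a single set-theoretic observation — that reaching $A$ is the complementary event of remaining in $A^c$ — and then to commute the policy optimization with the complement operation $1-(\cdot)$.

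First I would fix an arbitrary policy $\pi \in \Pi$ and initial state $x_k \in \mathcal{X}$ and work on the unique trajectory measure over $\mathcal{B}(\mathcal{X}^{N-k+1})$ induced by $T$, $x_k$ and $\pi$, whose existence is guaranteed by Section \ref{section_stochastic_systems}. For a fixed trajectory $x_{k:N}$, the statement ``$\exists x_i \in x_{k:N}: x_i \in A$'' is precisely the negation of ``$x_i \in A^c$ for all $i \in \{k,\dots,N\}$'', which in the notation of Section \ref{section_safety} reads $x_{k:N} \in A^c$. Since $A \in \mathcal{B}(\mathcal{X})$ implies $A^c \in \mathcal{B}(\mathcal{X})$, and over the finite horizon both events are finite combinations of Borel cylinder sets, they are measurable and partition the trajectory space. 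Their probabilities therefore sum to one, giving, for every $\pi$,
\begin{equation*}
P(\exists x_i \in x_{k:N}: x_i \in A | x_k, \pi) = 1 - P(x_{k:N} \in A^c | x_k, \pi).
\end{equation*}

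Next I would optimize over $\pi$ on both sides. Because $t \mapsto 1-t$ is strictly decreasing, the elementary identities $\sup_{\pi}(1-f(\pi)) = 1 - \inf_{\pi} f(\pi)$ and $\inf_{\pi}(1-f(\pi)) = 1 - \sup_{\pi} f(\pi)$ hold, and under the attainment assumption behind \eqref{eq_existsForallAndSupinf1}--\eqref{eq_existsForallAndSupinf2} the suprema and infima are attained, so they may be replaced by maxima and minima over the common policy class $\Pi$. Applying $\max_{\pi}$ to the displayed identity yields
\begin{equation*}
V^*_{R^\uparrow,k,A}(x_k) = 1 - \min_{\pi \in \Pi} P(x_{k:N} \in A^c | x_k, \pi) = 1 - V^*_{I_\downarrow,k,A^c}(x_k),
\end{equation*}
which is the first claim, and applying $\min_{\pi}$ instead yields $V^*_{R_\downarrow,k,A}(x_k) = 1 - V^*_{I^\uparrow,k,A^c}(x_k)$, the second.

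The argument is short, so the only point deserving genuine care is in the first step: confirming that the reachability event and the $A^c$-invariance event are exactly complementary as measurable subsets of the trajectory space, rather than merely complementary pointwise at a single state. The finiteness of the horizon is what makes this clean, since ``reaching $A$ at least once'' is then a finite union of Borel events whose complement is again Borel; the remaining order-reversal and $\sup/\inf$ exchange are routine.
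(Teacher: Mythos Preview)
Your proof is correct and follows essentially the same approach as the paper: both identify the reachability event as the complement of the $A^c$-invariance event, then use the order-reversal of $1-(\cdot)$ to exchange $\max$ and $\min$ over $\Pi$. Your version is slightly more careful about measurability and states the policy-level identity before optimizing, but the underlying argument is identical.
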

\begin{proof}
We start by showing the first relation.
Note that $x_{k:N}\in A \Leftrightarrow \nexists x_i \in x_{k:N} : x_i\notin A$. Thus,
\begin{align*}
    V^*_{R^\uparrow,k,A}(x_k) &= \max_{\pi\in\Pi}P(\exists x_i \in x_{k:N}: x_i\in A|x_{k}, \pi) \\
    &= \max_{\pi\in\Pi} 1 - P(\nexists x_i \in x_{k:N}: x_i\in A|x_{k}, \pi) \\
    &= 1 - \min_{\pi\in\Pi}P(\nexists x_i \in x_{k:N}: x_i\in A|x_{k}, \pi) \\
    &= 1 - \min_{\pi\in\Pi}P(\nexists x_i \in x_{k:N}: x_i\notin A^c|x_{k}, \pi) \\
    &= 1 - \min_{\pi\in\Pi}P(x_{k:N}\in A^c|x_{k}, \pi) \\
    &= 1 - V^*_{I_\downarrow,k,A^c}(x_k)
\end{align*}
which yields the desired equality and allows to compute
\begin{align*}
    \Omega^A_{R^\uparrow,k}(p) 
    &= \{x_k\in\mathcal{X}|V^*_{R^\uparrow,k,A}(x_k)\geq p\},\\
    &= \{x_k\in\mathcal{X}|1-V^*_{I_\downarrow,k,A^c}(x_k)\geq p\}.
\end{align*}
The proof for the second relation is similar and has already been shown in \cite{intermediate_DpReachability}
\end{proof}
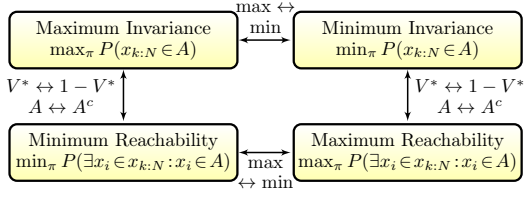
\begin{figure}[!tbp]
\centering
\resizebox{0.8\columnwidth}{!}{%
\begin{tikzpicture}[node distance=1cm, auto] 
\tikzset{
    mynode/.style={rectangle,rounded corners,draw=black, top color=white, bottom color=yellow!30,very thick, inner sep=1em, minimum size=3em,minimum width=12em, text centered},
    myarrow/.style={->, >=latex', shorten >=1pt, thick},
    myarrow2/.style={<->, >=latex', shorten >=1pt, thick},
    mylabel/.style={text width=30em, text centered} 
}  
\node[mynode,label={[align=center]center:Maximum Invariance\\ $\max_{\pi} P(x_{k:N}\!\in\!A)$}] (viability) {};  
\node[mynode, below= of viability,label={[align=center]center:Minimum Reachability\\ $\min_{\pi} P(\exists x_i\!\in\!x_{k:N}\!:\!x_i\!\in\!A)$}] (maxcost) {}; 
\node[mynode, right=1cm of viability,label={[align=center]center:Minimum Invariance\\ $\min_{\pi} P(x_{k:N}\!\in\!A)$}] (Invariance) {}; 
\node[mynode, below=of Invariance,label={[align=center]center:Maximum Reachability\\ $\max_{\pi} P(\exists x_i\!\in\!x_{k:N}\!:\!x_i\!\in\!A)$}] (Reachability) {};  
\draw[myarrow2] (viability.south) -- (maxcost.north)node[midway,right,label={[align=center]left:$V^*\leftrightarrow 1-V^*$\\$A\leftrightarrow A^c$}]{};
\draw[myarrow2] (viability.east) --  (Invariance.west)node[midway,below,label={[align=center]above:$\max\leftrightarrow$\\$\min$}] {};
\draw[myarrow2] (Invariance.south) --  (Reachability.north)node[midway,left,label={[align=center]right:$V^*\leftrightarrow 1-V^*$\\$A\leftrightarrow A^c$}]{};
\draw[myarrow2] (maxcost.east) -- (Reachability.west)node[midway,above,label={[align=center]below:$\max$\\$\leftrightarrow\min$}] {};
\end{tikzpicture} 
}
\caption{Links between invariance and reachability.} 
\label{img_linksBetweenInvarianceReachability}
\end{figure}
From now the dependency on the set $A$ will be omitted wherever possible to simplify the notation.

\section{Safety computation}
\label{sec_computation}
We now show how to solve for maximum/minimum invariance using DP recursions. Following \cite{intermediate_DpReachability} we introduce an assumption to guarantee existence of the solutions $V_{I^\uparrow,k}^*(x_k)$ and $V_{I_\downarrow,k}^*(x_k)$.
\begin{assumption}
    The set \begin{align*}
        \left\{u_{k}\!\in\!\mathcal{U}\!:\!\int_{x_{k+1}\in\mathcal{X}}\hspace{-2.5em}V_{I^\uparrow,k+1}^*(x_{k+1})T(dx_{k+1}|x_{k},u_{k})\geq \lambda\right\}
    \end{align*} and the set \begin{align*}
        \!\left\{u_{k}\!\in\!\mathcal{U}\!:\!\int_{x_{k+1}\in\mathcal{X}}\hspace{-2.5em}V_{I_\downarrow,k+1}^*(x_{k+1})T(dx_{k+1}|x_{k},u_{k})\leq \lambda\right\}
    \end{align*}
    are compact for all $x_k\in\mathcal{X}, \lambda\in\mathbb{R},k\in[0,N-1]$.
    \label{ass_attainable}
\end{assumption}
If Assumption \ref{ass_attainable} holds, the maximum in equation \eqref{eq_existsForallAndSupinf1} and minimum in equation \eqref{eq_existsForallAndSupinf2} are attained (Lemma 3.1 in \cite{intermediate_bertsekas1996stochastic}). A more intuitive, but also more restrictive condition is the continuity of $T(\cdot|\cdot,u_{k})$ with respect to $u_{k}$ (see \cite{similarSolutions_LpReachAviod}).

\subsection{Computation via Dynamic Programming}
\label{section_dp}
\begin{theorem}
    \label{thm_dp_recursion_via}
    Under Assumption \ref{ass_attainable}, the maximum invariance problem can be solved by the DP recursion (\cite{intermediate_DpReachability}) 
    \begin{align*}
            V_{I^\uparrow,N}^*(x_N)&\!=\!\mathbb{1}_A(x_N), \\
            V_{I^\uparrow,k}^*(x_{k})&\!=\!
            \max_{u_{k}}\mathbb{1}_A(x_{k})\!\int_{x_{k+1}\in\mathcal{X}}\hspace{-2.2em}V_{I^\uparrow,{k+1}}^*(x_{k+1})T(dx_{k+1}|x_{k},\!u_{k}).
    \end{align*}
    The minimum invariance problem can be solved by the DP recursion 
    \begin{align*}
            V_{I_\downarrow,N}^{*}(x_N)\!&=\!\mathbb{1}_A(x_N), \\
            V_{I_\downarrow,k}^{*}(x_{k})\!&=\!
             \min_{u_{k}}\mathbb{1}_A(x_{k})\!\int_{x_{k+1}\in\mathcal{X}}\hspace{-2.5em}V_{I_\downarrow,{k+1}}^*(x_{k+1})T(dx_{k+1}|x_{k},\!u_{k}).
    \end{align*}
\end{theorem}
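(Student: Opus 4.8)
The plan is to argue by backward induction on $k$, using the per-policy recursion for $V_k^{\pi}$ established in Section \ref{section_safety} as the building block and then inserting the maximization (respectively minimization) over Markov policies on top of it. I would carry out the maximum invariance case in detail; the minimum case is handled verbatim with $\max$ replaced by $\min$ and the first set in Assumption \ref{ass_attainable} replaced by the second. Note first that, since we optimize only over the Markov policy class $\Pi$, there is no issue of history-dependent policies outperforming Markov ones, which keeps the induction self-contained.

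For the base case $k=N$, the quantity $P(x_N\in A|x_N,\pi)=\mathbb{1}_A(x_N)$ does not depend on $\pi$, so $V_{I^\uparrow,N}^*(x_N)=\max_{\pi\in\Pi}\mathbb{1}_A(x_N)=\mathbb{1}_A(x_N)$, matching the recursion. For the inductive step I would assume the identity $V_{I^\uparrow,k+1}^*(x_{k+1})=\max_{\pi\in\Pi}P(x_{k+1:N}\in A|x_{k+1},\pi)$ and start from the definition $V_{I^\uparrow,k}^*(x_k)=\max_{\pi\in\Pi}P(x_{k:N}\in A|x_k,\pi)$. Substituting the per-policy recursion from Section \ref{section_safety} gives
\begin{align*}
V_{I^\uparrow,k}^*(x_k)=\max_{\pi\in\Pi}\;\mathbb{1}_A(x_k)\!\int_{x_{k+1}\in\mathcal{X}}\hspace{-2.5em}V_{k+1}^\pi(x_{k+1})\,T(dx_{k+1}|x_k,\mu_k(x_k)),
\end{align*}
and since $\mathbb{1}_A(x_k)\geq 0$ is independent of the policy it can be pulled outside the maximization.

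The crux of the argument, and the step I expect to be the main obstacle, is to split the maximization over $\pi=(\mu_0,\mu_1,\dots)$ into an outer maximization over the present decision $u_k=\mu_k(x_k)$ and an inner maximization over the tail $(\mu_{k+1},\mu_{k+2},\dots)$, and then to interchange that inner maximization with the integral against $T(\cdot|x_k,u_k)$. Because the policy is Markov, the tail decisions may be chosen independently for each realized $x_{k+1}$, so pointwise the inner maximum of $V_{k+1}^\pi(x_{k+1})$ equals $V_{I^\uparrow,k+1}^*(x_{k+1})$ by the induction hypothesis. Turning this pointwise statement into the identity $\max_{(\mu_{k+1},\dots)}\int V_{k+1}^\pi\,T(dx_{k+1}|x_k,u_k)=\int V_{I^\uparrow,k+1}^*\,T(dx_{k+1}|x_k,u_k)$ requires a measurable-selection and interchange-of-supremum-and-integral argument, guaranteeing that a universally measurable tail policy attaining (or approximating) the pointwise optima exists; this is precisely the role played by the existence theory underlying Assumption \ref{ass_attainable} (Lemma 3.1 in \cite{intermediate_bertsekas1996stochastic}).

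Once the tail has been collapsed, it remains to treat the outer maximization over $u_k$, yielding
\begin{align*}
V_{I^\uparrow,k}^*(x_k)=\mathbb{1}_A(x_k)\max_{u_k\in\mathcal{U}}\int_{x_{k+1}\in\mathcal{X}}\hspace{-2.5em}V_{I^\uparrow,k+1}^*(x_{k+1})\,T(dx_{k+1}|x_k,u_k),
\end{align*}
which is the claimed recursion after reabsorbing $\mathbb{1}_A(x_k)$ into the maximization. Here I would invoke Assumption \ref{ass_attainable} directly: compactness of the superlevel sets $\{u_k:\int V_{I^\uparrow,k+1}^*\,T(dx_{k+1}|x_k,u_k)\geq\lambda\}$ ensures the supremum over $u_k$ is attained, so the $\max$ is justified and $V_{I^\uparrow,k}^*$ is well defined and measurable, closing the induction. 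For the minimum case the same steps apply, using that $\mathbb{1}_A(x_k)\geq 0$ still permits factoring it out of the $\min$, and using the second compactness condition in Assumption \ref{ass_attainable} to attain the minimum over $u_k$.
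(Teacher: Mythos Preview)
Your backward-induction argument is correct and is the standard route to this result. Note, however, that the paper itself does not actually give a proof: it simply defers the maximum invariance recursion to \cite{intermediate_DpReachability} and asserts that the minimum case is similar. What you have written is essentially a sketch of the argument that appears in that reference (and in \cite{intermediate_bertsekas1996stochastic}), so there is no genuine methodological difference to compare; you have just supplied the content that the paper outsources. The one point worth tightening is the interchange of the inner supremum over tail policies with the integral: you correctly flag this as the crux and invoke measurable selection, but be aware that Assumption \ref{ass_attainable} as stated governs attainment of the outer $\max_{u_k}$, while the tail-collapse step relies on the universal measurability framework (Propositions 7.47--7.50 in \cite{intermediate_bertsekas1996stochastic}) rather than on Assumption \ref{ass_attainable} directly.
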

\begin{proof}
    The first statement is shown in \cite{intermediate_DpReachability}.  The proof for the second statement is similar.
\end{proof}


For the maximum invariance problem, we can ease the computation by the following observation from \cite{intermediate_ComputationalApproaches}:
For all $x_{k} \in A^c,\pi\in\Pi, V_{I^\uparrow,k}^{\pi}(x_{k})=P(x_{k:N}\in A|x_{k},\pi)=0$. Consequently, we only need to compute $V_{I^\uparrow,k}^{*}(x_{k})$ for every $x_{k}\in A$, for which
\begin{align*}
    V_{I^\uparrow,k}^*(x_{k})\!&=\!\max_{u_{k}} \mathbb{1}_A(x_{k})\!\int_{x_{k+1}\in\mathcal{X}}\hspace{-2.5em}V_{I^\uparrow,{k+1}}^*(x_{k+1})T(dx_{k+1}|x_{k},\!u_{k})\\
    &=\!\max_{u_{k}} \int_{x_{k+1}\in A}\hspace{-2.5em}V_{I^\uparrow,k+1}^*(x_{k+1})T(dx_{k+1}|x_{k},u_{k}),
\end{align*}
where the last equality follows since for any $x_{k+1} \in \mathcal{X}\setminus A$ we have that $V_{I^\uparrow,k+1}^{\pi}(x_{k+1}) = 0$. Thus restricting integration to $x_{k+1}\in A$ does not change the value of the integral. It is easy to see that a similar statement holds for the minimum invariance problem, which simplifies to 
\begin{align*}
    V_{I_\downarrow,k}^*(x_{k})= \min_{u_{k}} \int_{x_{k+1}\in A}\hspace{-2.5em}V_{I_\downarrow,{k+1}}^*(x_{k+1})T(dx_{k+1}|x_{k},u_{k}).
\end{align*}

\subsection{Computation via Linear Programming}
\label{section_lp}
An interesting theoretical property of the simplified recursions is that they allow us to formulate the problems as infinite-dimensional LPs. Under assumption \ref{ass_attainable}, we relax seeking the maximum/minimum of the objective function into seeking a mimum/maximum value bounded by the set of possible objective values. For instance we can solve for the maximum invariance probability at time $k$ as
\begin{align*}
    &\min_{V_{k}} &&\int_{x_{k}\in A} \hspace{-1.6em}V_{k}(x_k)c(dx_k)\\
     & \ \text{s.t.} &&V_{k}(x_{k}) \geq \int_{x_{k+1}\in A}\hspace{-2.5em}V_{I^\uparrow,k+1}(x_{k+1})T(dx_{k+1}|x_{k},u_{k}) 
\end{align*}
where $c(\cdot)$ is a nonnegative measure that assigns positive mass to all open subsets of $\mathcal{X}$ and the constraints must hold for all $x_{k}\in A, u_{k}\in\mathcal{U}$. A similar relaxation has been proposed in \cite{similarSolutions_LpReachAviod} to formulate reach-avoid problems as LPs. Concatenating the constraints for different times $k$, we can solve invariance problems for all time steps in a single linear program, leading to 
\begin{align*}
    &\min_{V_{0:N}} &&\sum_{k=0}^{N-1}\int_{x_{k}\in A} \hspace{-1.6em}V_{k}(x_{k})c(dx_{k})\\
     & \ \text{s.t.} &&V_{k}(x_{k}) \geq \int_{x_{k+1}\in A}\hspace{-2.5em}V_{k+1}(x_{k+1})T(dx_{k+1}|x_{k},u_{k}) \\  
    &&&V_{N}(x_N) = 1
\end{align*}
for the maximum invariance problem and
\begin{align*}
    &\max_{V_{0:N}} &&\sum_{k=0}^{N-1}\int_{x_k\in A}\hspace{-1.6em} V_{k}(x_k)c(dx_k)\\
     & \ \text{s.t.} &&V_{k}(x_{k}) \leq \int_{x_{k+1}\in A}\hspace{-2.5em}V_{k+1}(x_{k+1})T(dx_{k+1}|x_{k},u_{k}) \\  
    &&& V_{N}(x_N) = 1
\end{align*}
 for the minimum invariance problem, where the constraints must hold for all $x_{k},x_{N}\in A, u_{k}\in\mathcal{U}, k\in[0,N-1]$. We denote by $V_{I^\uparrow,0:N}^{LP}(x_k)$ and $V_{I_\downarrow,0:N}^{LP}(x_k)$ the respective minimizers of the LPs.


Before we state our main results we need the following Lemma (see also Theorem 1a in \cite{similarSolutions_LpReachAviod} for reach-avoid problems). 
\begin{lemma}
\label{lemma_no_feasible_more_optimal}
    There is no feasible solution of the LP such that there exists $k\in[0,N]$ and $x_{k}\in A$ such that $V_{I^\uparrow,k}^{LP}(x_{k})<V_{I^\uparrow,k}^*(x_{k})$. %
\end{lemma}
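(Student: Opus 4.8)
The plan is to show that \emph{every} feasible solution $V_{0:N}$ of the maximum invariance LP dominates the optimal value function pointwise on the safe set, i.e.\ $V_{k}(x_{k})\ge V^{*}_{I^\uparrow,k}(x_{k})$ for all $k\in[0,N]$ and all $x_{k}\in A$. Once this domination is in hand, the claim is immediate: if no feasible solution ever drops below $V^{*}_{I^\uparrow,k}$, then there can be no feasible solution with $V_{I^\uparrow,k}^{LP}(x_{k})<V_{I^\uparrow,k}^{*}(x_{k})$ at any $k$ and $x_{k}\in A$. I would establish the domination by backward induction on $k$, exploiting the fact that the LP constraints mirror the DP recursion of Theorem \ref{thm_dp_recursion_via}, with the single maximization replaced by a family of inequality constraints indexed by $u_{k}$.

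For the base case $k=N$, the equality constraint $V_{N}(x_{N})=1$ forces $V_{N}(x_{N})=V^{*}_{I^\uparrow,N}(x_{N})$ for every $x_{N}\in A$, so domination holds with equality. For the inductive step I would assume $V_{k+1}(x_{k+1})\ge V^{*}_{I^\uparrow,k+1}(x_{k+1})$ for all $x_{k+1}\in A$, fix $x_{k}\in A$, and use feasibility, which requires
\[
V_{k}(x_{k})\ \ge\ \int_{x_{k+1}\in A} V_{k+1}(x_{k+1})\,T(dx_{k+1}|x_{k},u_{k})
\]
to hold for \emph{all} $u_{k}\in\mathcal{U}$. Since $T(\cdot|x_{k},u_{k})$ is a nonnegative measure, the induction hypothesis propagates monotonically through the integral, so for every $u_{k}$ the right-hand side is at least $\int_{x_{k+1}\in A} V^{*}_{I^\uparrow,k+1}(x_{k+1})\,T(dx_{k+1}|x_{k},u_{k})$. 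Chaining the two inequalities and then taking the maximum over $u_{k}$ on the right --- which is attained under Assumption \ref{ass_attainable} and equals $V^{*}_{I^\uparrow,k}(x_{k})$ by the simplified recursion --- yields $V_{k}(x_{k})\ge V^{*}_{I^\uparrow,k}(x_{k})$, closing the induction.

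I expect the only delicate point to be the treatment of the maximization. The LP deliberately trades $\max_{u_{k}}$ for a family of constraints indexed by $u_{k}$; the key observation is that one feasible $V_{k}$ must satisfy all of them simultaneously, so it bounds the supremum over $u_{k}$ of the right-hand side. Assumption \ref{ass_attainable} guarantees this supremum is achieved and hence coincides with $V^{*}_{I^\uparrow,k}(x_{k})$ from Theorem \ref{thm_dp_recursion_via}; note that even without attainment one would still obtain a bound by the supremum, which is sufficient for the inequality. Everything else --- measurability of the feasible $V_{k+1}$ so that the integrals are well defined, and monotonicity of integration against the nonnegative kernel $T$ --- is routine and needs no further comment.
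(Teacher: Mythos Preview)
Your proof is correct and follows essentially the same backward-induction argument as the paper: both start from the terminal equality $V_N=1=V^{*}_{I^\uparrow,N}$ on $A$ and propagate the inequality backward via the LP constraints and the DP recursion. The only cosmetic differences are that the paper phrases the inductive step as a contradiction and plugs in the specific optimal input $u_k^{*}=\mu_k^{*}(x_k)$, whereas you argue directly and take the maximum over $u_k$; the content is the same.
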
%
\begin{proof}
    Assume, for the sake of contradiction, that there exists a feasible solution to the LP, where, for some $k\in[0,N-1]$, $V_{I^\uparrow,k+1}^{LP}(\cdot)=V_{I^\uparrow,k+1}^*(\cdot)$ and there exists some $x_k$ such that $V_{I^\uparrow,k}^{LP}(x_{k})<V_{I^\uparrow,k}^*(x_{k})$. Let $\pi^*=\{\mu_{0}^*,\dots,\mu_{N-1}^*\}$ be the optimal policy obtained from the DP recursion. Since the LP solution is feasible we have for all $k\in[0,N-1], x_{k}\in A, u_{k}^*=\mu_{k}^*(x_k)$ 
    \begin{align*}
        V_{I^\uparrow,k}^{LP}(x_{k}) &\geq  \int_{x_{k+1}\in A}\hspace{-2.5em}V_{I^\uparrow,{k+1}}^{LP}(x_{k+1})T(dx_{k+1}|x_{k},u_{k}^*)\\
        &= \int_{x_{k+1}\in A}\hspace{-2.5em}V_{I^\uparrow,k+1}^*(x_{k+1})T(dx_{k+1}|x_{k},u_{k}^*) \\ 
        &= V_{I^\uparrow,k}^*(x_{k})\\
        &> V_{I^\uparrow,k}^{LP}(x_{k})
    \end{align*}
    We obtain $V_{I^\uparrow,k}^{LP}(x_{k})>V_{I^\uparrow,k}^{LP}(x_{k})$, which is a contradiction.

    Applying prior result recursively starting from the given terminal value $V_{I^\uparrow,N}^{LP}(x_N)\allowbreak=V^*_{I^\uparrow,N}(x_N)=1$ yields the claim for all $k\in[0,N]$.  
\end{proof}
In fact, the opposite relation, that $V_{I_\downarrow,k}^{LP}(x_{k})>V_{I_\downarrow,k}^*(x_{k})$ is impossible, holds for the minimum invariance case. The proof is very similar so we will skip it for brevity.

We are now ready to state our main result.
\begin{theorem}
\label{theorem_LP}
Under Assumption \ref{ass_attainable} the DP recursions for the viability and invariance problem have a solution. Then so do the respective LPs; moreover the solutions coincide up to a set of c-measure zero. 
\end{theorem}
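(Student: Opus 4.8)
The plan is to prove the statement for the maximum invariance (viability) LP only, since the minimum invariance case is completely dual. Existence of the DP solution $V_{I^\uparrow,k}^*$ is already guaranteed by Assumption \ref{ass_attainable} through Lemma 3.1 in \cite{intermediate_bertsekas1996stochastic}, so the actual work is threefold: (i) exhibit a feasible point for the LP, (ii) identify the LP optimum with the DP solution so that the LP is solvable, and (iii) upgrade the pointwise comparison supplied by Lemma \ref{lemma_no_feasible_more_optimal} to equality up to a set of $c$-measure zero.

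First I would show that $V_{I^\uparrow,0:N}^*$ is itself LP-feasible. By the simplified DP recursion, for each $x_k\in A$ the value $V_{I^\uparrow,k}^*(x_k)$ equals the maximum over $u_k$ of $\int_{x_{k+1}\in A} V_{I^\uparrow,k+1}^*(x_{k+1})\,T(dx_{k+1}|x_k,u_k)$; since Assumption \ref{ass_attainable} guarantees this maximum is attained, we in particular have $V_{I^\uparrow,k}^*(x_k)\geq \int_{x_{k+1}\in A} V_{I^\uparrow,k+1}^*(x_{k+1})\,T(dx_{k+1}|x_k,u_k)$ for every $u_k\in\mathcal{U}$, which is exactly the LP constraint. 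Together with $V_{I^\uparrow,N}^*(x_N)=\mathbb{1}_A(x_N)=1$ on $A$, this makes $V_{I^\uparrow,0:N}^*$ a feasible point, so the LP is feasible. For optimality, Lemma \ref{lemma_no_feasible_more_optimal} asserts that any feasible $V_{0:N}$ satisfies $V_k(x_k)\geq V_{I^\uparrow,k}^*(x_k)$ pointwise on $A$ for all $k$; because $c$ is a nonnegative measure, integrating this inequality gives
\begin{equation*}
\sum_{k=0}^{N-1}\int_{x_k\in A} V_k(x_k)\,c(dx_k)\;\geq\;\sum_{k=0}^{N-1}\int_{x_k\in A} V_{I^\uparrow,k}^*(x_k)\,c(dx_k),
\end{equation*}
so the objective is bounded below by the value attained at the feasible point $V_{I^\uparrow,0:N}^*$. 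Hence the minimum is attained, $V_{I^\uparrow,0:N}^*$ is an optimal solution, and the LP is solvable.

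For the coincidence, I would take any optimal solution $V_{I^\uparrow,0:N}^{LP}$. Being feasible, it satisfies $V_{I^\uparrow,k}^{LP}\geq V_{I^\uparrow,k}^*$ pointwise on $A$, while its objective equals the optimum, i.e.\ that of $V_{I^\uparrow,0:N}^*$. Subtracting yields $\sum_{k=0}^{N-1}\int_{x_k\in A}\bigl(V_{I^\uparrow,k}^{LP}(x_k)-V_{I^\uparrow,k}^*(x_k)\bigr)c(dx_k)=0$ with every integrand nonnegative, which forces $V_{I^\uparrow,k}^{LP}=V_{I^\uparrow,k}^*$ for $c$-almost every $x_k\in A$ and each $k$, i.e.\ coincidence up to a set of $c$-measure zero. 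The minimum invariance statement follows identically after reversing all inequalities (using the remark after Lemma \ref{lemma_no_feasible_more_optimal} that feasible solutions satisfy $V_{I_\downarrow,k}\leq V_{I_\downarrow,k}^*$) and replacing the minimization by the corresponding maximization.

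I expect the only genuinely delicate point to be the last one: Lemma \ref{lemma_no_feasible_more_optimal} delivers a \emph{pointwise} inequality, yet integration against $c$ cannot distinguish functions differing on a $c$-null set, so equality can only be claimed $c$-almost everywhere; this is precisely why the theorem qualifies its conclusion with ``up to a set of $c$-measure zero'' instead of asserting pointwise identity. A minor well-posedness caveat also deserves a remark, namely that the objective values subtracted in the final step are finite, which holds here because each $V_k\in[0,1]$ (so the integrals are bounded by the $c$-mass of $A$); this keeps the subtraction legitimate and closes the argument.
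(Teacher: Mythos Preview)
Your proposal is correct and follows essentially the same approach as the paper: feasibility of the DP solution for the LP, the pointwise lower bound from Lemma \ref{lemma_no_feasible_more_optimal}, integration against $c$ to obtain optimality, and then the $c$-a.e.\ coincidence argument. The only difference is presentational---your subtraction argument for coincidence is the direct version of the paper's contrapositive phrasing, and your closing remark on finiteness of the objective is a small extra care the paper omits.
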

\begin{proof}
    For the sake of brevity we only prove the LP for the maximum invariance case. The proof for the minimum invariance case is similar (see also \cite{similarSolutions_LpReachAviod} for reach-avoid problems). 
    

    First note that the DP solution is feasible for the LP, since
    \begin{align*}
        V_{I^\uparrow,k}^*(x_{k})   = \int_{x_{k+1}\in A}\hspace{-2.5em}V_{I^\uparrow,k+1}^*(x_{k+1})T(dx_{k+1}|x_{k},\mu^*_{k}(x_{k})) 
    \end{align*}  for all $k\in[0,N-1]$ and $V^*_{I^\uparrow,N}(x_N)=1$ for all $x_N\in A$, which
    meets the constraints of the LP.    
    
    Recall that for all $k\in[0,N]$ and $x_k\in A$, any feasible solution yields $V_k^{LP}(x_k)\geq V_k^*(x_k) $ by Lemma \ref{lemma_no_feasible_more_optimal}. Thus, for all feasible solutions 
    \begin{align*}
        \sum_{k=0}^{N-1}\int_{x_k\in A} \hspace{-1.6em}V_{k}^{LP}(x_k)c(dx_k) \geq \sum_{k=0}^{N-1}\int_{x_k\in A} \hspace{-1.6em}V^*_{k}(x_k)c(dx_k). 
    \end{align*}
    Assume now that for some $k\in[0,N]$ there is a set $\{x_k\in A: V_k^{LP}(x_k)> V_k^*(x_k)\}$ with non-zero c-measure. Then this would consequently lead to a suboptimal objective value, as there is no $V_k^{LP}(x_k)< V_k^*(x_k)$ to compensate by Lemma \ref{lemma_no_feasible_more_optimal}. Since $V_k^*(x_k)$ is feasible and it yields the lowest possible objective value it is a globally optimal solution to the LP. 
\end{proof}

Since the LP solution might differ from the DP solution for a set of states of zero c-measure, the optimal solution to the LP does not have to be unique. Moreover, if $V^{LP}_{I^\uparrow,k}(x_k)$ and $V^{*}_{I^\uparrow,k}(x_k)$ are solutions to the LP and DP recursions, respectively, then by optimality of the DP solution
there does not exist $\pi\in\Pi,k\in[0,N],x_k\in A:V^{LP}_{I^\uparrow,k}(x_k)=V^{\pi}_{I^\uparrow,k}(x_k) >V^{*}_{I^\uparrow,k}(x_k)$. 

Next, we consider the set of optimal policies as 
\begin{equation*}
    \Pi^*\!=\!\left\{\pi\!\in\!\Pi\!:\!\sum_{k=0}^{N-1}\int_{x\in A} \hspace{-1.3em}V_{I^\uparrow,k}^{\pi}(x)c(x)\!=\!\sum_{k=0}^{N-1}\int_{x\in A} \hspace{-1.3em}V_{I^\uparrow,k}^*(x)c(x)\right\}.
\end{equation*}

One might be tempted to think that an optimal policy $\pi^*\in\Pi^*$ is given by the active constraints of the LP (if they exist). While this is true for countable state problems with a c-measure that is strictly positive on every state, it is not always true for infinite state problems. As an example, assume that $V_{0:N}^{LP}$ is an optimal solution to the LP. We denote by $\mathbb{B}_k=\{x_k\in A: V_{k}^{LP}(x_{k})>V^*_{k}(x_{k})\}$, which must have zero c-measure for the objective to be optimal. 
    Interestingly, for states $x_{k}\in A\setminus\mathbb{B}_{k}$ the optimal inputs $u_{k}^*=\mu_{k}^*(x_{k})$ corresponding to the optimal policy obtained from the DP recursion must be contained within the active constraints, since for these inputs 
    \begin{align*}
         \int_{x_{k+1}\in A}\hspace{-2.5em}V_{k+1}^{LP}&(x_{k+1})T(dx_{k+1}|x_{k},u_{k}^*)= {V}_{k}^{LP}(x_{k})\\&={V}^*_{k}(x_{k})   = \int_{x_{k+1}\in A}\hspace{-2.5em}V^*_{k+1}(x_{k+1})T(dx_{k+1}|x_{k},u_{k}^*).
    \end{align*}
    Note that these inputs must consequently yield zero $T$-measure on $\mathbb{B}_{k+1}$. However, there may also exist an input $\tilde{u}_{k}\in\mathcal{U}$ corresponding to a tight constraint with
    \begin{align*}
          \int_{x_{k+1}\in A}\hspace{-2.5em}V^*_{k+1}&(x_{k+1})T(dx_{k+1}|x_{k},\tilde{u}_{k})  < {V}^*_{k}(x_{k}) \\&=   {V}_{k}^{LP}(x_{k}) =  \int_{x_{k+1}\in A}\hspace{-2.5em}V_{k+1}^{LP}(x_{k+1})T(dx_{k+1}|x_{k},\tilde{u}_{k}).  
    \end{align*}
    This is possible since $\mathbb{B}_{k+1}$ can have non-zero $T$-measure. Consequently, choosing a policy $\pi\in\Pi$ based on active constraints may result in such suboptimal inputs $\tilde{u}_{k}=\mu_k(x_{k})$. If it does so for a non-zero c-measure set of states $x_{k}\in A$, then the policy is suboptimal, since the actual incurred invariance is 
    \begin{align*}
          {V}^{\pi}_{k}(x_{k}) &= \int_{x_{k+1}\in A}\hspace{-2.5em}V^{\pi}_{k+1}(x_{k+1})T(dx_{k+1}|x_{k},\tilde{u}_{k}) \\ &\leq \int_{x_{k+1}\in A}\hspace{-2.5em}V^*_{k+1}(x_{k+1})T(dx_{k+1}|x_{k},\tilde{u}_{k})  < {V}^*_{k}(x_{k}). 
    \end{align*}

For the sake of completeness we also provide the following results from \cite{similarSolutions_LpReachAviod} dealing with reach-avoid problems, where one tries to reach a target set $\mathcal{T}\subseteq A$ while avoiding to step out of the safe set $A$. This means that we want to maximize $P(\exists x_i\!\in\!x_{k:N}\!:\!x_i\!\in\!\mathcal{T} \ \text{and} \ x_{k:i}\in A|x_k,\pi)$. This probability is one for every state $x_k\in \mathcal{T}$ and zero for every state $x_k\notin A$. For all $x_k\in A\setminus\mathcal{T}$, the maximum reach-avoid probability is given by

\begin{align*}
    &\min_{V_{0:N}} &&\sum_{k=0}^{N-1}\int_{x\in A} \hspace{-1.3em}V_{k}(x)c(dx)\\
     &\ \text{s.t.} &&  V_{k}(x_{k}) \geq \int_{x_{k+1}\in A\setminus \mathcal{T}}\hspace{-3.6em}V_{k+1}(x_{k+1})T(dx_{k+1}|x_{k},u_{k}) \\&&& \qquad + \int_{x_{k+1}\in\mathcal{T}}\hspace{-2.5em}T(dx_{k+1}|x_{k},u_{k})  \\ 
    &&& V_{N}(x_N) = 0,
\end{align*}
where the constraints must hold for all $x_{k},x_{N}\in A\setminus \mathcal{T}, u_{k}\in\mathcal{U}, k\in[0,N-1]$.

\subsection{Approximation Techniques}
\label{section_approx}
Several techniques have been proposed to approximate the solution to the DP/LP formulations. The state and input space may be divided into a finite number of sets, each of which is represented by a discrete point; this effectively amounts to approximating the value functions in the class of piecewise constant functions and replacing integration by summation over a finite set \citep{intermediate_DpReachability}. The computation then reduces to the case of finite states and actions. Moreover, in this class Assumption \ref{ass_attainable} is always fulfilled and the DP and LP solutions always coincide. 

More generally, the approach proposed in \cite{intermediate_ADPforReach} approximates the value functions in a subspace spanned by a finite number of basis functions, for example Radial Basis Functions (RBF). The optimization variables of the LP then become parameters of the basis functions. By sampling constraints one can obtain probabilistic guarantees based on the scenario approach. To reconstruct the optimal policy at a given state one has to sample inputs and then compute and compare the corresponding invariance or reachability probabilities. This can by circumvented by computing the state-action value functions $Q_k(x_k,u_k)$ in addition to $V_k(x_k)$. The linear program for the maximum invariance computation is then solved stagewise for all $k\in[0,N-1]$ and reads as
\begin{align*}
    &\min_{Q_{k}} &&\int_{x_{k}\in A,u_k\in \mathcal{U}} \hspace{-3.8em}Q_{k}(x_{k},u_k)c(dx_{k},du_k)\\
     &\ \text{s.t.} && Q_{k}(x_{k},u_k) \geq \int_{x_{k+1}\in A}\hspace{-2.5em}V_{k+1}(x_{k+1})T(dx_{k+1}|x_{k},u_{k}), 
\end{align*}
where the constraints must hold for all $x_k\in A$ and $u_k\in \mathcal{U}$, and we denote by $V_{k}(x_{k})=\min_{u_k\in\mathcal{U}}Q_{k}(x_{k},u_k)$, where $V_{N}(x_{N})=1$ for all $x_N \in A$. The optimal policy can then be recovered as the minimizing $u_k$ for the respective $Q_k$ at a given state $x_k$. Again, there may be a zero c-measure set of states and inputs for which the Q-function does not attain its feasible minimum.

An inherent problem of these approximations is that the computational complexity of the LP grows exponentially with the state and input space dimensionality.

\section{Numerical example}
\label{section_numerics}
As a numerical example we consider an autonomous robot that cannot rotate while moving. Thus, the robot first rotates to a desired orientation $u_k$ at its current location $x_k=(x_k^x,x_k^y)^{\top}$ and afterwards moves for a certain distance, which we fix at $\SI{3}{\meter}$ for simplicity. However, due to measurement noise, the true orientation will be $\theta_k = u_k + w_k$, with $w_k\sim \mathcal{N}(0,\pi/5)$, leading to
\begin{align*}
    \begin{bmatrix}
        x_{k+1}^x\\
        x_{k+1}^y
    \end{bmatrix} &= \begin{bmatrix}
        x_k^x + 3sin(\theta_k)\\
        x_k^y + 3cos(\theta_k)
    \end{bmatrix}.
\end{align*}

\begin{wrapfigure}{r}{0.4\columnwidth}
\vspace{-.8em}
\centering
\begin{tikzpicture}
\node[inner sep=0pt] (whitehead) at (0,0)
{\includegraphics[width=.35\columnwidth,clip, trim=4.7cm 3cm 6.5cm 2cm]{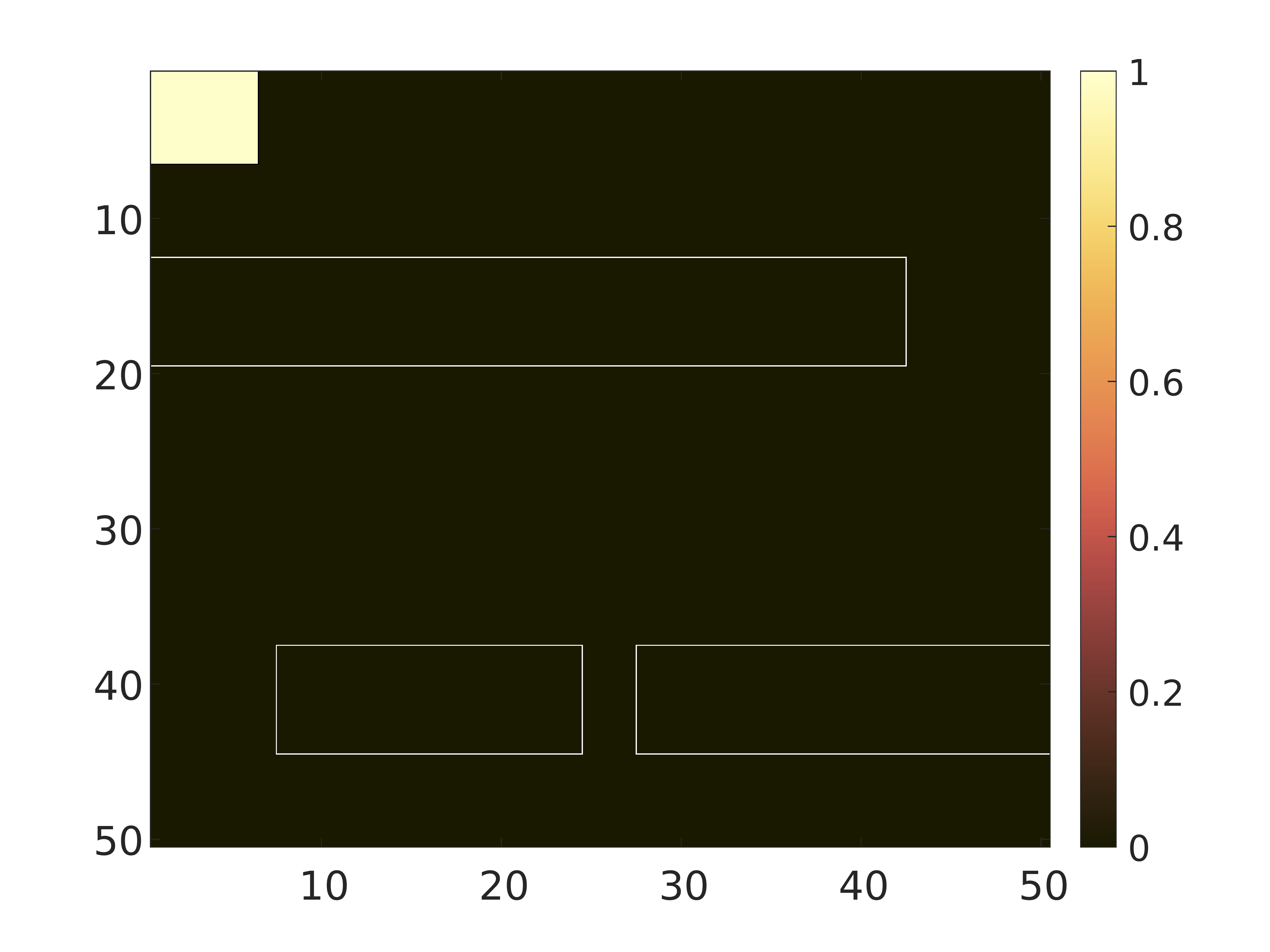}};
  \node[] at (-0.3,0.5) {\small \textcolor{white}{unsafe}};
  \node[] at (-0.61,-0.8) {\small \textcolor{white}{unsafe}};
  \node[] at (0.8,-0.8) {\small \textcolor{white}{unsafe}};
  \node[] at (-0.68,1.13) {\small \textcolor{yellow}{target}};
\end{tikzpicture}
\captionsetup{format=plain,%
    labelsep=period}
\caption{The target set is marked yellow, edges of unsafe sets white.}
\label{fig_simulation_setting}
\vspace{-1em}
\end{wrapfigure}

We define the state space as $\mathcal{X}=\mathbb{R}^2$, the action space as any rotation $\mathcal{U}=(0,2\pi]$ and the safe set as a room of $A=[0,50]\times[0,50]\SI{}{\meter\squared}$ excluding some additional interior walls. The robot must reach the top left corner of the room without hitting any walls (see figure \ref{fig_simulation_setting}).

We compute the reach-avoid probability of the robot for $100$ time steps. Therefore, we discretize the room into $\SI{1}{\meter}\times\SI{1}{\meter}$ blocks and the action space into $18$ actions. To compute the transition kernel, we simulate the state transition at every state action pair $1000$ times with random samples of $w_k$. In the second example we use gaussian radial basis functions $\phi_{\epsilon,c}(x)=e^{-\epsilon^2||x-c||_2^2}$. We evenly place $15\times 15$ basis functions across the state space and choose $\epsilon=\frac{15}{50}$. To generate constraints for the LP, we randomly sample $800$ states, $30$ inputs per state and $15$ samples of $w_k$ for every state action pair. For simplicity, we compute the results sequentially stage-wise instead of solving all time-steps in a single LP. The results are shown in figure \ref{fig_numericalResults}. The RBF-based solution shows artifacts visible as darker gaps and tends to be more optimistic than the gridding approach, i.e., it generally yields higher probability values. This is especially notable in the time-step $k=N-100$, where the gridding approach computes a reduced reach-avoid probability at the edges and in between the two lower walls.

\begin{figure}[!hbtp]
\centering
\resizebox{0.97\columnwidth}{!}{%
\begin{tikzpicture}
\node[inner sep=0pt] (whitehead) at (0,-13)
{\includegraphics[width=1.2\textheight,height=110mm,clip, trim=0cm 0cm 0cm 0cm]{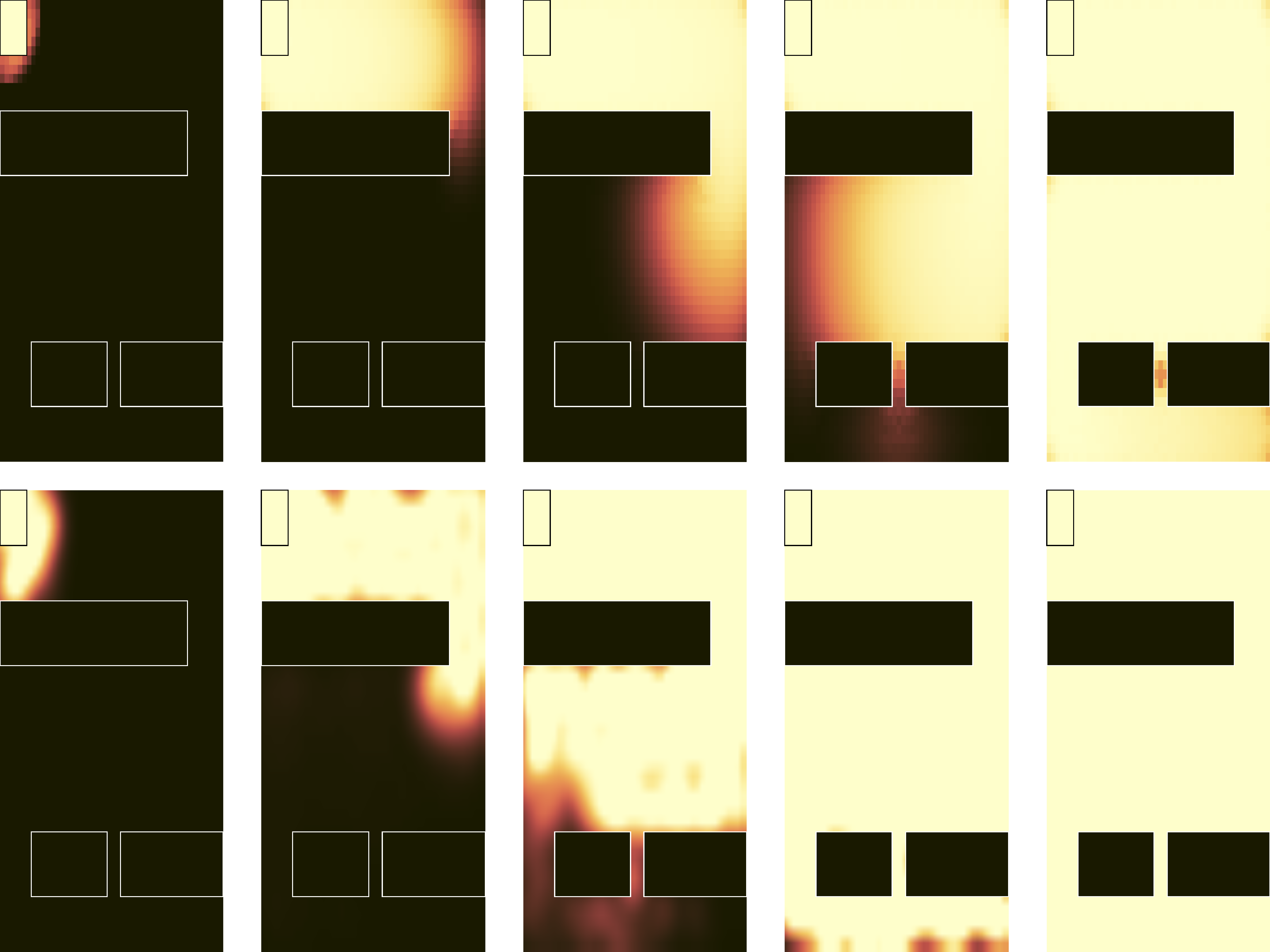}};
  \node[rotate=90] at (-16,-10) {\Huge Discretization};
  \node[rotate=90] at (-16,-16) {\Huge RBFs};
  \node[] at (-12,-19.5) {\Huge $k=N-3$};
  \node[] at (-6,-19.5) {\Huge $k=N-20$};
  \node[] at (0,-19.5) {\Huge $k=N-30$};
  \node[] at (6,-19.5) {\Huge $k=N-40$};
  \node[] at (12,-19.5) {\Huge $k=N-100$};
\node[inner sep=0pt] (russell) at (16.3,-13.2)
{\includegraphics[height=.51\textheight,clip, trim=34.5cm 1cm 2cm 1cm]{numericsSetting.pdf}};
\end{tikzpicture}
}
\caption{The plots show reach-avoid probabilities, which are computed backwards in time for $100$ time-steps using an RBF and discretization based approach. The robot aims to reach the upper left corner while avoiding to hit any walls. Initial states with high probabilities of achieving this goal until the terminal time correspond to bright areas. The further back in time (i.e., the further right), the higher these probabilities.}
\label{fig_numericalResults}
\end{figure}

To compare both approximation techniques further, the computation has been carried out with a different number of basis functions and grid densities. Grid100 to Grid25 denote results when using gridding with $100\times 100$ and $25\times 25$ representative points, respectively. RBF20 to RBF5 denote results when using $20\times 20$ to $5\times 5$ basis functions over the state space. The value of $\epsilon$ and the number of state-samples has been varied accordingly. Figure \ref{fig_errorPlot} shows the two norm distance of the respective results evaluated at $100\times 100$ grid points to those of Grid100. With increasing fineness of the grid and with increasing number of basis functions, the results are expected to converge to the true reach-avoid probabilities. Indeed, with increasing grid density and number of basis functions the plot shows a decreasing difference to the results of Grid100. Interestingly, in our simulation, the gridding based approach has been less sensitive to the approximation density than the basis function based approach. In addition, the computation times when using gridding have been significantly smaller (for $20$ time steps: $\SI{90}{\second}$ for Grid100, $\SI{4}{\second}$ for Grid25, $\SI{27}{\second}$ for RBF5, $\SI{5355}{\second}$ for RBF20 on a Surface Pro 8 i7). This is mainly due to the fact that the transition kernel is computed once for the gridding based approach and stored as a matrix at the cost of high memory usage, while the constraints in the basis function based approach have been resampled at every time-step. However, storing the full transition matrix is only feasible for low dimensional systems since its size scales exponentially with the state dimensionality. 

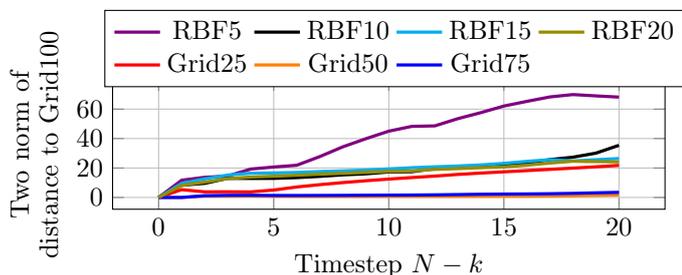
\begin{figure}[hbtp]
    \centering
\begin{tikzpicture}[]
	\begin{axis}[legend columns=2,transpose legend, height=3.3cm,width=\columnwidth,
		grid=major,legend style={at={(-0.015,1.24)},anchor=west},ylabel style={align=center}, ylabel={Two norm of \\distance to Grid100}, xlabel={Timestep $N-k$},ymax=80,ytick={0,20,40,60}
	]
	\addplot[solid,violet, very thick] coordinates {
 (0,0)
(1.000000,11.715170)(2.000000,13.678249)(3.000000,14.535770)(4.000000,19.276676)(5.000000,20.672796)(6.000000,21.864773)(7.000000,27.740496)(8.000000,34.353127)(9.000000,39.866157)(10.000000,45.011787)(11.000000,48.270004)(12.000000,48.604143)(13.000000,53.510549)(14.000000,57.597874)(15.000000,62.069782)(16.000000,65.255064)(17.000000,68.268922)(18.000000,69.866223)(19.000000,68.964666)(20.000000,68.149624)
	};
  	\addplot[solid,red, very thick] coordinates { 
 (0,0)
(1.000000,5.291503)(2.000000,3.807873)(3.000000,3.862900)(4.000000,3.776968)(5.000000,5.047502)(6.000000,7.109302)(7.000000,8.685960)(8.000000,10.094092)(9.000000,11.344255)(10.000000,12.476896)(11.000000,13.517292)(12.000000,14.508305)(13.000000,15.641857)(14.000000,16.570673)(15.000000,17.399430)(16.000000,18.229214)(17.000000,19.078672)(18.000000,19.923678)(19.000000,20.810603)(20.000000,21.786087)
	};
 	\addplot[solid,black, very thick] coordinates { 
 (0,0)
(1.000000,8.290658)(2.000000,9.563205)(3.000000,12.993734)(4.000000,12.917478)(5.000000,12.954810)(6.000000,13.463215)(7.000000,14.311604)(8.000000,15.348041)(9.000000,16.034984)(10.000000,17.318412)(11.000000,17.469503)(12.000000,19.638185)(13.000000,20.116659)(14.000000,21.776451)(15.000000,21.463684)(16.000000,23.328410)(17.000000,25.719220)(18.000000,27.332726)(19.000000,30.097341)(20.000000,35.414818)
	};
  	\addplot[solid,orange, very thick] coordinates { 
 (0,0)
(1.000000,0.000000)(2.000000,0.874929)(3.000000,0.886850)(4.000000,0.916437)(5.000000,0.816083)(6.000000,0.731366)(7.000000,0.704119)(8.000000,0.704300)(9.000000,0.706917)(10.000000,0.726811)(11.000000,0.744637)(12.000000,0.744292)(13.000000,0.760441)(14.000000,0.793461)(15.000000,0.832653)(16.000000,0.875799)(17.000000,0.952512)(18.000000,1.133541)(19.000000,1.405023)(20.000000,1.691402)
	};
  	\addplot[solid,cyan,  very thick] coordinates { 
 (0,0)
(1.000000,9.563545)(2.000000,12.658935)(3.000000,15.224938)(4.000000,16.448855)(5.000000,16.625421)(6.000000,16.976756)(7.000000,17.590385)(8.000000,18.050132)(9.000000,18.589653)(10.000000,19.369730)(11.000000,20.196311)(12.000000,20.809829)(13.000000,21.366588)(14.000000,22.106851)(15.000000,23.149805)(16.000000,24.509587)(17.000000,25.657271)(18.000000,24.789722)(19.000000,25.635300)(20.000000,26.435041)
	};
  	\addplot[solid,blue, very thick] coordinates { 
 (0,0)
(1.000000,0.000000)(2.000000,1.219549)(3.000000,1.437289)(4.000000,1.485328)(5.000000,1.347609)(6.000000,1.332999)(7.000000,1.370498)(8.000000,1.424847)(9.000000,1.508189)(10.000000,1.582113)(11.000000,1.640312)(12.000000,1.778646)(13.000000,1.963177)(14.000000,2.174366)(15.000000,2.296922)(16.000000,2.389443)(17.000000,2.536890)(18.000000,2.780313)(19.000000,3.144469)(20.000000,3.465376)
	};
  	\addplot[solid,olive, very thick] coordinates { 
 (0,0)
(1.000000,8.169972)(2.000000,10.444916)(3.000000,12.792263)(4.000000,13.869662)(5.000000,14.409106)(6.000000,15.093132)(7.000000,15.789578)(8.000000,16.364761)(9.000000,17.020740)(10.000000,17.709692)(11.000000,18.008254)(12.000000,19.048066)(13.000000,19.598331)(14.000000,20.199820)(15.000000,20.756813)(16.000000,21.929234)(17.000000,23.357009)(18.000000,24.593631)(19.000000,24.343291)(20.000000,24.170556)
	};
	\addlegendentry{RBF5}
	\addlegendentry{Grid25}
	\addlegendentry{RBF10}
	\addlegendentry{Grid50}
	\addlegendentry{RBF15}
	\addlegendentry{Grid75}
	\addlegendentry{RBF20}
	\end{axis}
\end{tikzpicture}
    \caption{We computed reach-avoid probabilities using gridding and basis function approximations and varied the gridding density/number of basis functions. The plot shows an increasing deviation over time between the computed reach-avoid probabilities to those obtained using gridding with $100\times 100$ representative points. }
    \label{fig_errorPlot}
\end{figure}




\section{Conclusion}
\label{section_concl}
We established a link between probabilistic invariance and reachability for stochastic systems and proposed infinite dimensional DP and LP formulations to solve for these quantities. Approximate formulations have been evaluated in a numerical example. We are aware that the LP formulations are not computationally more efficient than classical DP formulations. However, they may allow for different analysis and provide a new perspective on the problem.



\bibliography{root}             
                                                   







\end{document}